\def\BibTeX{{\rm B\kern-.05em{\sc i\kern-.025em b}\kern-.08em
    T\kern-.1667em\lower.7ex\hbox{E}\kern-.125emX}}
\newtheorem{theorem}{\indent Theorem}
\newtheorem{lemma}{\indent Lemma}
\newtheorem*{proof}{\indent Proof}
\newtheorem{remark}{\indent Remark}
\newtheorem{corollary}{\indent Corollary}
\begin{document}

\title{Quantized Analog Beamforming Enabled Multi-task Federated Learning Over-the-air
}
\vspace{-0.5cm}
\author{\IEEEauthorblockN{Jiacheng~Yao\IEEEauthorrefmark{1}\IEEEauthorrefmark{2},
Wei~Xu\IEEEauthorrefmark{1}\IEEEauthorrefmark{2},
Guangxu~Zhu\IEEEauthorrefmark{3},
Zhaohui~Yang\IEEEauthorrefmark{4},
Kaibin~Huang\IEEEauthorrefmark{5}, and
Dusit~Niyato\IEEEauthorrefmark{6}}
\IEEEauthorblockA{\IEEEauthorrefmark{1}National Mobile Communications Research Laboratory, Southeast University, Nanjing, China}
\IEEEauthorblockA{\IEEEauthorrefmark{2}Purple Mountain Laboratories, Nanjing, China}
\IEEEauthorblockA{\IEEEauthorrefmark{3}Shenzhen Research Institute of Big Data, Shenzhen, China}
\IEEEauthorblockA{\IEEEauthorrefmark{4}College of Information Science and Electronic Engineering, Zhejiang University, Hangzhou, China}
\IEEEauthorblockA{\IEEEauthorrefmark{5}Department of Electrical and Electronic Engineering, The University of Hong Kong, Hong Kong SAR, China}
\IEEEauthorblockA{\IEEEauthorrefmark{6}School of Computer Science and Engineering, Nanyang Technological University, Singapore}
\IEEEauthorblockA{Emails: \{jcyao,wxu\}@seu.edu.cn, gxzhu@sribd.cn, yang\_zhaohui@zju.edu.cn}
\vspace{-0.7cm}}

\maketitle

\begin{abstract}
Over-the-air computation (AirComp) has recently emerged as a pivotal technique for communication-efficient federated learning (FL) in resource-constrained wireless networks. Though AirComp leverages the superposition property of multiple access channels for computation, it inherently limits its ability to manage inter-task interference in multi-task computing. In this paper, we propose a quantized analog beamforming scheme at the receiver to enable simultaneous multi-task FL. Specifically, inspiring by the favorable propagation and channel hardening properties of large-scale antenna arrays, a targeted analog beamforming method in closed form is proposed for statistical interference elimination. Analytical results reveal that the interference power vanishes by an order of $\mathcal{O}\left(1/N_r\right)$ with the number of analog phase shifters, $N_r$, irrespective of their quantization precision. Numerical results demonstrate the effectiveness of the proposed analog beamforming method and show that the performance upper bound of ideal learning without errors can be achieved by increasing the number of low-precision analog phase shifters.
\end{abstract}

\begin{IEEEkeywords}
Multi-task Federated learning, over-the-air computation (AirComp), analog beamforming.
\end{IEEEkeywords}

\section{Introduction}
Over-the-air computation (AirComp) becomes a promising technique for enabling communication-efficient federated learning (FL) over resource-constrained wireless networks \cite{xw,amiri,tsp}. Specifically, by exploiting the superposition property of MAC channels, the signals are amplitude-modulated and simultaneously transmitted with the same radio resource and hence the summation is automatically achieved over-the-air. Therefore, applying AirComp to FL can deeply integrate uplink gradient transmission and model aggregation, breaking through the bottleneck of limited bandwidth~\cite{twcyao}.


However, existing AirComp-enabled FL (AirFL) designs primarily focused on single-antenna systems, limiting their capability to handle upto one single computational task at a receiver. The design of multi-antenna systems mainly focused on improving diversity gain through beamforming, allowing more devices to access and minimizing the mean squared error (MSE) of gradient calculation \cite{bf0}. The potential multiplexing capability of multi-antenna systems to support multi-task computing has rarely been considered in recent works. However, emerging practical applications of FL urgently necessitate the implementation of multi-task computing, primarily including the personalized FL scenarios \cite{cl1,shi2024empowering}.


To better support multi-task FL, it is necessary to explore the implementation of multi-task AirComp based on typical multiple-input multiple-output (MIMO) systems. However, the presence of additional inter-task interference becomes a decisive factor in determining the computational performance.
Moreover, due to the nonorthogonal multiple access characteristics of AirComp, it is challenging to directly deploy the classical zero-forcing (ZF) receiver to eliminate interference. To facilitate the implementation of multi-task AirComp, various techniques have been proposed in recent years. For instance, a large number of antennas at the transceivers are considered in \cite{icassp2}, together with coordinated ZF beamforming for complete interference elimination. To reduce the number of antennas, the authors in \cite{ssguo} combined the beamforming optimization with device scheduling, which helps reduce the number of devices and ensure interference-free computation. Without complete interference elimination, the spatial correlation between devices was considered in \cite{xjyuan}, which allows the proposition of joint transceiver beamforming and device selection design for task-oriented interference suppression.

Nevertheless, deploying a large number of antennas to tackle with inter-task interference results in significant hardware cost. Meanwhile, the training process of FL tasks have inherent robustness to noise \cite{pca}, reducing the necessity for complete interference elimination. Motivated by this background, in this paper, we adopt a low-cost hybrid beamforming architecture with limited radio frequency (RF) chains as a solution \cite{sub},  and propose an analog beamforming method in closed form to enable multiple FL tasks simultaneously without the utilization of channel state information at the transmitter (CSIT). Our theoretical analysis validates the vanishing power of inter-task interference with order $\mathcal{O}\left(1/N_r\right)$ with the number of analog phase shifters, $N_r$, which is irrespective of their quantization precision. The numerical results reveal that by employing the proposed analog beamforming method, all FL tasks can approach the upper bound performance of ideal learning with error-free transmission.

\section{System Model}
In this paper, we consider a wireless FL system with $N$ individual FL tasks.  Without loss of generality, we assume that $K$ distributed devices are uniformly divided into $N$ clusters, yielding each cluster of $L=\frac{K}{N}$ users without overlap. A central parameter server (PS) is deployed to coordinate the parallel training of $N$ FL tasks. The training procedure and transmission model are elaborated in the sequel.

\subsection{Federated Learning Model}
Let $\mathcal{D}_{n,l}$, $\forall n\in[N], l\in[L]$, denote the local dataset of the $l$-th device in the $n$-th FL task and $D_{n,l}$ represents the number of training samples in dataset $\mathcal{D}_{n,l}$. Then,  the local loss function of the $l$-th device with the $n$-th FL task is given by $f_{n,l} (\mathbf{v}_n,\mathcal{D}_{n,l})= \frac{1}{D_{n,l}} \sum_{\mathbf{u}\in \mathcal{D}_{n,l}}	\mathcal{L}(\mathbf{v}_n,\mathbf{u})$,
where $\mathbf{u}$ is the data sample selected from the local dataset and $\mathcal{L}(\mathbf{v}_n,\mathbf{u})$ represents the sample-wise loss function quantifying the prediction
error of model parameter $\mathbf{v}_n$ on data sample $\mathbf{u}$.
The learning process aims to optimize the specific model parameter $\mathbf{v}_n\in\mathbb{R}^{d}$ for the $n$-th task by minimizing the global loss function defined as
 $F_n(\mathbf{v}_n)=\sum_{l=1}^L \alpha_{n,l} f_{n,l}(\mathbf{v}_n,\mathcal{D}_{n,l})$,
where $\alpha_{n,l}\triangleq \frac{D_{n,l}}{\sum_{\ell=1}^L D_{n,\ell}}$ represents the aggregation weight for the $l$-th device. For simplicity, we assume that for all tasks, the models contain the same number of parameters, denoted by $d$. For models with different sizes, zero padding can be employed to ensure a uniform size, thereby applicable to general scenarios.

For model training, we employ the typical FedAvg algorithm to iteratively minimize $F_n(\mathbf{v}_n)$. The main steps of the $t$-th round of the FedAvg algorithm are listed as follows.
\begin{itemize}
\item [1)]\emph{Model broadcasting}: The PS broadcasts the latest global model $\mathbf{v}_{n,t}$ to all devices with the $n$-th FL task. 
\item [2)]\emph{Local computing}: Based on the received global model $\mathbf{v}_{n,t}$, each device updates its local model via mini-batch stochastic gradient descent (SGD), i.e., $\mathbf{v}_{n,t,i+1}^l= \mathbf{v}_{n,t,i}^l-\eta_{n,t}\nabla f_{n,l}\left(\mathbf{v}_{n,t,i}^l, \bm{\xi}_{n,t,i}^l\right)$,
where $\mathbf{v}_{n,t,i}^l$ denote the local model of the $l$-th device obtained after the $i$-th local iteration, $\mathbf{v}_{n,t,0}^l$ is initialized as $\mathbf{v}_{n,t}$, $\eta_{n,t}$ denotes the learning rate, and $\bm{\xi}_{n,t,i}^l$ is a local mini-batch selected from $\mathcal{D}_{n,l}$. Each device performs $E$ local iterations in a specific round and updates its local model as $\mathbf{v}_{n,t,E}^l$.
\item [3)]\emph{Local updates uploading}: After local computing, each device transfers its local 
updates $\mathbf{g}_{n,l,t}$ to the PS, where  $\mathbf{g}_{n,l,t}\!\triangleq\!\frac{\mathbf{v}_{n,t,0}^l\!-\!\mathbf{v}_{n,t,E}^l}{\eta_{n,t}}\!=\!\sum_{i=0}^{E-1}\! \nabla f_{n,l}\!\left(\mathbf{v}_{n,t,i}^l, \bm{\xi}_{n,t,i}^l\right)$. 
\item [4)]\emph{Model aggregation}: Upon receiving all the local updates, the PS calculates the global update as
$\mathbf{g}_{n,t} = \sum_{l=1}^L \alpha_{n,l} \mathbf{g}_{n,l,t}$,
and then updates the global model according to $\mathbf{v}_{n,t+1}=\mathbf{v}_{n,t}-\eta_{n,t}\mathbf{g}_{n,t}$.
\end{itemize}

\subsection{Multi-task AirComp Model}

The downlink model broadcasting is usually assumed error-free due to abundant resource at the PS \cite{twcyao}. For the uplink, we adopt the AirComp technique for simultaneous transmission and model aggregation. Moreover, to enable parallel model aggregation for multiple FL tasks,  the PS exploits hybrid analog and digital beamforming with a typical sub-connected structure \cite{sub}. In particular, the PS is equipped with $N_{\text{RF}}$ RF chains and $N_r$ antennas. Each RF chain connects to an exclusive set of $M$ antennas through a dedicated phase shifter, where $M=\frac{N_r}{N_{\text{RF}}}$.  We consider the scenario with $N_{\text{RF}} = N$, where each RF chain and its connected analog array is dedicated to serving a specific FL task.

Now, it is ready to introduce the considered multi-task AirComp. Before transmission, local updates are first normalized~as $\mathbf{s}_{n,l,t}= \frac{\mathbf{g}_{n,l,t}-\mu_{n,l,t}\mathbf{1}}{v_{n,l,t}}$,
where $\mu_{n,l,t}$ and $v_{n,l,t}$ respectively represent the mean and standard deviation of all entries in $\mathbf{g}_{n,l,t}$, and $\mathbf{1}$ is all one vector.
Unlike traditional AirComp relying on perfect CSIT, we assume that no CSIT is available at the device, and consequently corresponding preprocessing, such as the typical channel inversion scheme in \cite{imperfect}, is not performed. Hence, the received signal at the PS, $\mathbf{Y}_t\in\mathbb{C}^{N\times d}$, is given as follows:
\begin{align} \label{receive}
    \mathbf{Y}_t=\mathbf{W}_t\mathbf{A}_t \sum_{i=1}^N \sum_{l=1}^L \sqrt{p_{i,l,t}}\mathbf{h}_{i,l,t}\mathbf{s}_{i,l,t}^T+\mathbf{W}_t\mathbf{A}_t \mathbf{N}_t,
\end{align}
where $\mathbf{h}_{i,l,t}\sim \mathcal{CN}\left(\mathbf{0}_{N_r\times 1},\mathbf{I}_{N_r}\right)$ denotes the channel between the PS and the $l$-th device in the $i$-th cluster, $p_{i,l,t}$ represents its transmit power, $\mathbf{W}_t=[\mathbf{w}_{1,t},\mathbf{w}_{2,t},\cdots,\mathbf{w}_{N,t}]^H \in \mathbb{C}^{N\times N}$ and $\mathbf{A}_t\in \mathbb{C}^{N\times N_r}$ represent the digital combiner and analog beamformer, respectively, and $\mathbf{N}_t\in \mathbb{C}^{N\times d}$ denotes the additive Gaussian noise matrix and its entries are independent and identically Gaussian distributed with zero mean and variance of $\sigma^2$. With the sub-connected structure, the analog beamformer can be expressed as $\mathbf{A}_t=\text{diag}\left\{ \mathbf{a}_{1,t}^H,\cdots, \mathbf{a}_{N,t}^H\right\}$,
where $\mathbf{a}_{i,t}$ denotes the analog beamforming vector of the $i$-th sub-array satisfying $\left | [\mathbf{a}_{i,t}]_j\right|^2=\frac{1}{M}$, $\forall j\in[M]$. Since the $n$-th RF chain is dedicated to serving the $n$-th FL task, we exploit the processed signal at the $n$-th RF chain as an estimate of the desired weighted aggregation, $\sum_{l=1}^L \alpha_{n,l} v_{n,l,t} \mathbf{s}_{n,l,t}$. It follows
\begin{align} \label{process}
    \mathbf{y}_{n,t}^T = \mathbf{w}_{n,t}^H \sum_{i=1}^N \sum_{l=1}^L \sqrt{p_{i,l,t}}\mathbf{A}_t\mathbf{h}_{i,l,t}\mathbf{s}_{i,l,t}^T+\mathbf{w}_{n,t}^H \mathbf{A}_t\mathbf{N}_t.
\end{align}
The global update for the $n$-th task is then calculated as
\begin{align}\label{ghat}
    \hat{\mathbf{g}}_{n,t}=\mathbf{y}_{n,t}+\sum_{l=1}^L \alpha_{n,l}\mu_{n,l,t}\mathbf{1},
\end{align}
where the statistics of the local updates are transmitted to PS in advance with ignorable communication overhead \cite{nfzhang}. By comparing (\ref{ghat}) with the ideal local updates, we express the error of global update estimation, $\mathbf{e}_{n,t}\triangleq \hat{\mathbf{g}}_{n,t}- \mathbf{g}_{n,t}$, as
\begin{align}\label{error}
    \mathbf{e}_{n,t}
    =& \sum_{l=1}^L \left(\sqrt{p_{n,l,t}}\mathbf{w}_{n,t}^H\mathbf{A}_t\mathbf{h}_{n,l,t}-\alpha_{n,l}v_{n,l,t}\right)\mathbf{s}_{n,l,t}\nonumber \\
    &+ \sum_{i\neq n}^N \sum_{l=1}^L \sqrt{p_{i,l,t}} \mathbf{w}_{n,t}^H \mathbf{A}_t\mathbf{h}_{i,l,t}\mathbf{s}_{i,l,t}\!+\! \mathbf{N}_t^T\mathbf{A}_t^T\mathbf{w}_{n,t}^*.
\end{align}
Note that the error in (\ref{error}) comprises of two main components. First, there is distortion in the aggregation weights of the desired signal. The second component consists mainly of interference from other tasks and the additive Gaussian noise.  To tackle with the inter-task interference, existing works \cite{icassp2,ssguo} rely on sufficient spatial degrees of freedom (DoF) to apply ZF receivers for complete interference elimination, which requires a large number of  RF chains, i.e., $N_{\text{RF}}>K$. However, in the scenario considered in this paper, the number of RF chains is limited, making complete interference elimination unattainable. In the subsequent section, we introduce how analog beamforming design is utilized to statistically eliminate interference and support the multi-task FL over-the-air. 

\section{Statistical Interference Elimination via Analog Beamforming}

Unlike traditional communication tasks, the introduction of zero-mean noise does not necessarily have a serious impact on the SGD process while it can sometimes even impose a positive effect \cite{sgd1}.
However, noise with non-zero mean inevitably introduces bias to the SGD process, leading to performance deterioration and potential failure of training convergence. These observations encourage us to avoid unnecessary resource overhead for complete interference elimination. Instead, we rely on low-cost analog beamforming to statistically eliminate the impact of inter-task interference.


\subsection{Typical Fully-Digital Beamforming Scheme}
\label{secRO}
Favorable propagation is a unique characteristic of large-scale MIMO systems. When the number of antennas, $N_r\to \infty$, the channels of different devices are asymptotically orthogonal, i.e.,  $\mathbf{h}_{i,l,t}^H \mathbf{h}_{i^\prime,l^\prime,t}\to 0$, $\forall (i,l)\neq (i^\prime,l^\prime)$. Inspired by the nature of favorable propagation, the inter-task interference can be statistically eliminated with fully-digital beamforming  at the receiver by applying a linear projection \cite{ro}, referred to as random orthogonalization (RO). To be specific, we set the fully-digital beamformer for the $n$-th task as $\mathbf{f}_{n,t}=\sum_{l=1}^L \mathbf{h}_{n,l,t}\in \mathbb{C}^{N_r}$ and the received signal in (\ref{process}) becomes 
\begin{align}\label{eq13}
    \mathbf{y}_{n,t}^T &= \sum_{i=1}^N \sum_{l=1}^L \sqrt{p_{i,l,t}}\mathbf{f}_{n,t}^H\mathbf{h}_{i,l,t}\mathbf{s}_{i,l,t}^T+\mathbf{f}_{n,t}^H \mathbf{N}_t\nonumber \\
    &\overset{\text{a.s}}{\to} \sum_{l=1}^L \sqrt{p_{n,l,t}}\left \Vert \mathbf{h}_{n,l,t}\right \Vert^2 \mathbf{s}_{i,l,t}^T,
\end{align}
where $\overset{\text{a.s}}{\to}$ represents ``almost surely converge to" and it is due to the favorable propagation property with $N_r\to \infty$, and the inter-device interference is asymptotically eliminated. Also, from the statistical perspective, we have
\begin{align}
    \mathbb{E}\left[ \mathbf{h}_{i,l,t}^H \mathbf{h}_{i^\prime,l^\prime,t}
    \right] =0, \enspace \forall (i,l)\neq (i^\prime,l^\prime).
\end{align}
Hence, it is straightforward to verify that the inter-task interference is statistically eliminated.
The receive beamforming acts as a filter, which filters out interference components orthogonal to it. However, achieving better interference elimination requires a substantial deployment of extra RF chains, significantly increasing hardware cost. Alternatively, large-scale antenna arrays enable the direct use of ZF receivers to obtain individual signals from different devices without interference. The RO approach may not necessarily yield better performance compared to the ZF receiver.

\subsection{Proposed Analog Beamforming with Continuous Phases}

The core idea of (\ref{eq13}) is to include the target channel components, i.e., $\mathbf{h}_{n,l,t}$, $\forall l$, in the $n$-th digital beamformer, thus achieving statistical interference elimination via the favorable propagation property. However, the included amplitude components of $\mathbf{h}_{n,l,t}$, $\forall l$, do not exert a significant impact. Hence, we posit that utilizing analog phase shifters alone can still achieve this statistical interference elimination. In particular, considering that the $n$-th sub-array only serves the $n$-th FL task, we set its analog beamforming as 
\begin{align}\label{analog}
    \mathbf{a}_{n,t}=\frac{1}{\sqrt{M}}\mathrm{exp}\left( j\angle \sum_{l=1}^L \mathbf{h}_{n,l,t,n}\right),\enspace\forall n
\end{align}
where $\mathbf{h}_{n,l,t,n}\in \mathbb{C}^{M\time 1}$ denotes the channel between the $n$-th sub-array and the $l$-th device in the $n$-th cluster in round $t$ and $\mathbf{h}_{n,l,t}=\left[\mathbf{h}_{n,l,t,1}^H,\cdots,\mathbf{h}_{n,l,t,N}^H \right]^H$, $\forall n\in[N], l\in[L]$.  In (\ref{analog}), we only incorporate the phases of channels related to devices in the $n$-th cluster, aiming at filtering out interference from other clusters. With the analog beamforming in (\ref{analog}), the effective channel for the $l$-th device in the $n$-th cluster at the $t$-th round is given as $\bar{\mathbf{h}}_{n,l,t}=\mathbf{A}_{t}\mathbf{h}_{n,l,t}$, whose statistics are derived in the following lemma.

\begin{lemma} \label{theorem1}
    With the analog beamforming in (\ref{analog}), the effective channel follows
    \begin{align}
        &\mathbb{E}\left[ \bar{\mathbf{h}}_{n,l,t}\right]=\frac{\sqrt{\pi M}}{2\sqrt{L}} \bm{\delta}_n,\enspace \mathbb{V}\left[ \bar{\mathbf{h}}_{n,l,t}\right]=\bm{\Sigma}_n,\enspace\forall n,
    \end{align}
    where $\bm{\delta}_n$ is the Kronecker delta vector with $[\bm{\delta}_n]_n=1$, $\mathbf{\Sigma}_n$ is a diagonal matrix, and its $n$-th diagonal element is $1-\frac{\pi}{4L}$ and all the other diagonal elements are 1.
\end{lemma}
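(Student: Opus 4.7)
The plan is to compute the $N$ entries of $\bar{\mathbf{h}}_{n,l,t}=\mathbf{A}_t\mathbf{h}_{n,l,t}$ individually, exploiting the block-diagonal structure of $\mathbf{A}_t$ under the sub-connected architecture. Writing $\mathbf{h}_{n,l,t}=\left[\mathbf{h}_{n,l,t,1}^H,\cdots,\mathbf{h}_{n,l,t,N}^H\right]^H$, the $m$-th entry of the effective channel is simply $\mathbf{a}_{m,t}^H\mathbf{h}_{n,l,t,m}$. I would split the analysis into the off-diagonal case $m\neq n$ (interference leakage to other RF chains) and the diagonal case $m=n$ (intended signal direction).

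For the off-diagonal entries, the key observation is independence: $\mathbf{a}_{m,t}$ is built entirely from the channels of devices in cluster $m$, whereas $\mathbf{h}_{n,l,t,m}$ is a sub-block of a device in cluster $n\neq m$, so the two are statistically independent. Conditioning on $\mathbf{a}_{m,t}$, the quantity $\mathbf{a}_{m,t}^H\mathbf{h}_{n,l,t,m}$ is a linear combination of i.i.d.\ $\mathcal{CN}(0,1)$ entries with unit-modulus coefficients scaled by $1/\sqrt{M}$, hence conditionally $\mathcal{CN}(0,\|\mathbf{a}_{m,t}\|^2)=\mathcal{CN}(0,1)$. This yields zero mean and unit variance, matching the non-$n$ diagonal entries of $\bm{\Sigma}_n$.

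The crux of the proof is the $m=n$ term. Defining $\mathbf{z}_n\triangleq\sum_{l'=1}^L\mathbf{h}_{n,l',t,n}$, I would write
\[
\mathbf{a}_{n,t}^H\mathbf{h}_{n,l,t,n}=\frac{1}{\sqrt{M}}\sum_{j=1}^M e^{-j\angle[\mathbf{z}_n]_j}\,[\mathbf{h}_{n,l,t,n}]_j,
\]
and observe that the summands are i.i.d.\ across the antenna index $j$. For a single summand, I would use the jointly-Gaussian orthogonal decomposition $[\mathbf{h}_{n,l,t,n}]_j=\tfrac{1}{L}[\mathbf{z}_n]_j+w_j$, where $w_j$ is independent of $[\mathbf{z}_n]_j$. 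Combining the identity $e^{-j\angle S}S=|S|$ with the Rayleigh first moment $\mathbb{E}|[\mathbf{z}_n]_j|=\tfrac{\sqrt{\pi L}}{2}$, valid because $[\mathbf{z}_n]_j\sim\mathcal{CN}(0,L)$, each summand has mean $\tfrac{\sqrt{\pi}}{2\sqrt{L}}$. Aggregating the $M$ summands delivers the claimed mean $\tfrac{\sqrt{\pi M}}{2\sqrt{L}}$, while the variance follows from $\mathbb{E}\bigl|e^{-j\angle[\mathbf{z}_n]_j}[\mathbf{h}_{n,l,t,n}]_j\bigr|^2=1$ and subtracting the squared mean, giving $1-\tfrac{\pi}{4L}$.

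Finally, to confirm that $\mathbb{V}[\bar{\mathbf{h}}_{n,l,t}]$ is diagonal, I would verify that cross-covariances between entries $m_1\neq m_2$ vanish. In each case, either the two sub-array beamformers act on disjoint and independent channel clusters, or the two blocks $\mathbf{h}_{n,l,t,m_1},\mathbf{h}_{n,l,t,m_2}$ are independent sub-vectors of the same i.i.d.\ Rayleigh vector, so the cross-expectation factorizes through a zero-mean factor. The main obstacle is the $m=n$ calculation, specifically setting up the conditional Gaussian decomposition that isolates the component of $[\mathbf{h}_{n,l,t,n}]_j$ aligned with $[\mathbf{z}_n]_j$; once that step is in place, the rest reduces to routine Rayleigh-moment and i.i.d.-summation arguments.
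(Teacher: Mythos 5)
Your proof is correct: the block-diagonal reduction, the conditional-Gaussian argument for the off-diagonal entries, and the orthogonal decomposition $[\mathbf{h}_{n,l,t,n}]_j=\tfrac{1}{L}[\mathbf{z}_n]_j+w_j$ combined with the Rayleigh first moment $\mathbb{E}\left|[\mathbf{z}_n]_j\right|=\tfrac{\sqrt{\pi L}}{2}$ all check out and deliver exactly the stated mean and diagonal covariance. The paper itself does not prove the lemma in-text (it defers to a supplementary file of a cited reference), but your derivation is the standard route such a proof would take, so there is nothing to flag.
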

\begin{proof}
    Please refer to  the supplementary file in \cite{shi2024empowering}. $\hfill\square$
\end{proof}

Building upon the analytical results in \emph{Lemma \ref{theorem1}}, the statistical interference elimination can be achieved only with analog beamforming. To be concrete, the digital combiner $\mathbf{W}_t$ does not cope with the inter-task interference. We set $\mathbf{W}_t=\zeta \mathbf{I}_N$, where $\zeta>0$ is a scaling factor. Then, it follows
\begin{align}\label{expectation}
    \mathbb{E}\left[\hat{\mathbf{g}}_{n,t}\right]&= \sum_{l=1}^L \frac{\zeta\sqrt{\pi Mp_{n,l,t}}}{2\sqrt{L}}\mathbf{s}_{n,l,t},
\end{align}
which no longer includes the inter-task interference terms.

To further demonstrate the statistical interference elimination method, we  characterize the scaling law of the average power of the interference with respect to the number of antennas, $N_r$, in the following theorem. The average power of the interference for the $n$-th task is given by
\begin{align}
    P_{\text{I},n}\triangleq \mathbb{E}\left[
    \left\Vert \sum_{i\neq n}^N \sum_{l=1}^L \sqrt{p_{i,l,t}} \mathbf{w}_{n,t}^H \mathbf{A}_t\mathbf{h}_{i,l,t}\mathbf{s}_{i,l,t} \right \Vert^2
    \right].
\end{align}

\begin{theorem} \label{lemma2}
    As the number of antennas, $N_r$, increases, the average power of the interference $P_{\text{I},n}$, $\forall n$, vanishes  by an order of~$\frac{1}{N_r}$.
\end{theorem}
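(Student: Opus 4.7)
\textbf{Proof plan for Theorem \ref{lemma2}.}

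The plan is to exploit the block-diagonal structure of $\mathbf{A}_t$ together with the independence of the analog beamformer $\mathbf{a}_{n,t}$ from channels belonging to other clusters, and then invoke the power-allocation scaling $p_{i,l,t}=\Theta(1/M)$ that is implicit in the unbiased-aggregation condition (\ref{expectation}). First, I would substitute $\mathbf{W}_t=\zeta\mathbf{I}_N$ so that $\mathbf{w}_{n,t}=\zeta\bm{\delta}_n$, and combine this with the sub-connected form $\mathbf{A}_t=\mathrm{diag}\{\mathbf{a}_{1,t}^H,\ldots,\mathbf{a}_{N,t}^H\}$ to obtain $\mathbf{w}_{n,t}^H\mathbf{A}_t\mathbf{h}_{i,l,t}=\zeta\,\mathbf{a}_{n,t}^H\mathbf{h}_{i,l,t,n}$, so that only the $n$-th sub-block of each interfering channel reaches the $n$-th RF chain. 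The interference power then collapses to
\begin{equation*}
P_{\text{I},n}=\zeta^2\,\mathbb{E}\!\left[\Bigl\|\sum_{i\neq n}\sum_{l=1}^L\sqrt{p_{i,l,t}}\,\mathbf{a}_{n,t}^H\mathbf{h}_{i,l,t,n}\,\mathbf{s}_{i,l,t}\Bigr\|^2\right].
\end{equation*}

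Next, I would expand the squared norm and evaluate every cross expectation $\mathbb{E}[(\mathbf{a}_{n,t}^H\mathbf{h}_{i,l,t,n})(\mathbf{a}_{n,t}^H\mathbf{h}_{i',l',t,n})^{*}]$. The key observation is that $\mathbf{a}_{n,t}$ depends only on $\{\mathbf{h}_{n,l,t,n}\}_l$ and is therefore statistically independent of every $\mathbf{h}_{i,l,t,n}$ with $i\neq n$. Conditioning on $\mathbf{a}_{n,t}$ and using rotational invariance of the complex Gaussian together with $\|\mathbf{a}_{n,t}\|^2=1$, each inner product $\mathbf{a}_{n,t}^H\mathbf{h}_{i,l,t,n}$ is $\mathcal{CN}(0,1)$, and the realisations across distinct interfering devices are conditionally independent with zero mean. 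The off-diagonal cross terms therefore vanish and the diagonal terms each contribute unity, leaving
\begin{equation*}
P_{\text{I},n}=\zeta^2\sum_{i\neq n}\sum_{l=1}^L p_{i,l,t}\,\|\mathbf{s}_{i,l,t}\|^2.
\end{equation*}

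Finally, I would read off the $1/N_r$ rate from the transmit powers. The unbiased-aggregation requirement implied by (\ref{expectation}), namely $\zeta\sqrt{\pi M\,p_{n,l,t}}/(2\sqrt{L})=\alpha_{n,l}v_{n,l,t}$, forces $p_{i,l,t}=\Theta(1/M)=\Theta(N/N_r)$; since $\|\mathbf{s}_{i,l,t}\|^2$ is bounded independently of $N_r$ and the number of interfering terms is $(N-1)L$, this yields $P_{\text{I},n}=O(1/N_r)$. I expect the main obstacle to be the conditional argument that the off-diagonal terms vanish: it requires cleanly separating the randomness of $\mathbf{a}_{n,t}$ from that of the interfering sub-channels and invoking rotational invariance of $\mathcal{CN}$ to secure the unit-variance statistics; once this step is secured, the remainder is a straightforward counting-plus-power-scaling argument.
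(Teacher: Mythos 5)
Your proposal is correct and follows essentially the same route as the paper: both arguments kill the cross terms to obtain $P_{\text{I},n}=d\zeta^2\sum_{i\neq n}\sum_l p_{i,l,t}$ (you via direct conditioning on $\mathbf{a}_{n,t}$ and rotational invariance of $\mathcal{CN}(\mathbf{0},\mathbf{I}_M)$, the paper via the second-moment statistics of Lemma~\ref{theorem1}, which amount to the same computation), and both extract the $1/N_r$ rate from the requirement that the desired-signal coefficient $\propto\zeta\sqrt{Mp_{n,l,t}}$ stay fixed. The only cosmetic difference is that you place the $\Theta(1/M)$ scaling on the transmit powers while the paper keeps $p_{i,l,t}$ fixed and scales $\zeta\propto 1/\sqrt{N_r}$; since only the product $\zeta^2 p_{i,l,t}$ enters $P_{\text{I},n}$, the conclusion is identical.
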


\begin{proof}
    From (\ref{error}), the average power of the interference for the $n$-th task is expressed as
    \begin{align}\label{eq20}
    &P_{\text{I},n}\!\overset{\text{(a)}}{=}d\sum_{i\neq n}^N \sum_{l=1}^L p_{i,l,t}\mathbb{E}\left[
    \left\vert \mathbf{w}_{n,t}^H \mathbf{A}_t\mathbf{h}_{i,l,t}\right \vert^2
    \right]\nonumber \\
    &\!=\!d\zeta^2\! \!\sum_{i\neq n}^N \!\sum_{l=1}^L \!p_{i,l,t}\bm{\delta}_{n}^H\mathbb{E}\!\!\left[
    \bar{\mathbf{h}}_{i,l,t} \bar{\mathbf{h}}_{i,l,t}^H
    \right]\!\bm{\delta}_{n}\!\!=\!d\zeta^2 \!\!\sum_{i\neq n}^N \!\sum_{l=1}^L \!p_{i,l,t},
    \end{align}
    where (a) is due to the independence between signal vectors and the fact that $\mathbb{E}\left[\mathbf{s}_{i,l,t}\right]=\mathbf{0}$. Moreover, as observed in (\ref{expectation}), the coefficient of the desired signal $\mathbf{s}_{n,l,t}$ is $\frac{\zeta\sqrt{\pi N_r p_{n,l,t}}}{2\sqrt{K}}$, which scales up with $\sqrt{N_r}$. To ensure a fixed aggregation coefficient, the scaling factor $\zeta$ must satisfy  $\zeta \propto \frac{1}{\sqrt{N_r}}$.
    Now, we can conclude that $P_{\text{I},n}\propto \frac{1}{N_r}$ and the proof completes.
    \hfill $\square$
\end{proof}

\begin{remark}
    By increasing the numbers of low-cost phase shifters, the proposed analog beamforming method can effectively mitigate interference without the need for additional RF chains. Moreover, with infinite phase shifters, i.e., $N_r \to \infty$, we have $P_{\text{I},n}\to0$ and the inter-task interference is asymptotically completely eliminated. 
\end{remark}

\begin{corollary}
    The proposed analog beamforming method enjoys the same scaling law for interference power with the RO scheme using fully-digital beamforming in Sec. \ref{secRO}.
\end{corollary}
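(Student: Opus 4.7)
The plan is to mirror the computation from \emph{Theorem \ref{lemma2}} but with the RO beamformer $\mathbf{f}_{n,t}=\sum_{l=1}^L \mathbf{h}_{n,l,t}$ in place of $\mathbf{A}_t\bm{\delta}_n$. First I would write the interference power for the RO scheme after absorbing a scaling factor $\zeta_{\text{RO}}$ into the receive filter (so that $\mathbf{w}_{n,t}^{\text{RO}}=\zeta_{\text{RO}}\mathbf{f}_{n,t}$), obtaining
\begin{align*}
P_{\text{I},n}^{\text{RO}} = d\,\zeta_{\text{RO}}^2 \sum_{i\neq n}^N \sum_{l=1}^L p_{i,l,t}\,\mathbb{E}\!\left[\left| \mathbf{f}_{n,t}^H \mathbf{h}_{i,l,t}\right|^2\right],
\end{align*}
which is the exact analog of the first line of (\ref{eq20}).

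Second, I would evaluate the expectation using the independence of $\{\mathbf{h}_{n,l',t}\}_{l'=1}^L$ from $\mathbf{h}_{i,l,t}$ whenever $i\neq n$. Expanding $\mathbf{f}_{n,t}^H\mathbf{h}_{i,l,t}=\sum_{l'=1}^L \mathbf{h}_{n,l',t}^H\mathbf{h}_{i,l,t}$ and taking expectations cluster-wise, the cross terms vanish and each diagonal term gives $\mathbb{E}[|\mathbf{h}_{n,l',t}^H\mathbf{h}_{i,l,t}|^2]=N_r$, since it is a sum of $N_r$ independent zero-mean unit-variance products of complex Gaussians. Thus $\mathbb{E}[|\mathbf{f}_{n,t}^H\mathbf{h}_{i,l,t}|^2]=L N_r$, so $P_{\text{I},n}^{\text{RO}}$ scales linearly in $\zeta_{\text{RO}}^2 N_r$.

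Third, I would fix $\zeta_{\text{RO}}$ from the desired-signal normalization in (\ref{eq13}): the coefficient of $\mathbf{s}_{n,l,t}$ in the RO received signal is $\zeta_{\text{RO}}\sqrt{p_{n,l,t}}\|\mathbf{h}_{n,l,t}\|^2$, whose magnitude grows like $N_r$ by the law of large numbers. To keep the aggregation coefficient fixed (the same pre-requisite used in the proof of \emph{Theorem \ref{lemma2}}), $\zeta_{\text{RO}}\propto 1/N_r$, hence $\zeta_{\text{RO}}^2 N_r \propto 1/N_r$. Combining with the bound from the previous step yields $P_{\text{I},n}^{\text{RO}}=\mathcal{O}(1/N_r)$, matching the scaling law of \emph{Theorem \ref{lemma2}}.

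The only subtle point, and what I expect to be the main obstacle, is justifying the normalization condition symmetrically with the analog case: in \emph{Theorem \ref{lemma2}} the desired-signal coefficient scales as $\sqrt{N_r}$ (through the $\sqrt{\pi M}/(2\sqrt{L})$ mean of $\bar{\mathbf{h}}_{n,l,t}$ from \emph{Lemma \ref{theorem1}}), whereas here it scales as $N_r$; in both cases, however, it is the ratio of interference power to squared signal gain that determines the effective post-combining distortion, and that ratio is $\mathcal{O}(1/N_r)$ for both schemes. A short remark making this normalization explicit will suffice to complete the argument without any further computation.
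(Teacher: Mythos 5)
Your argument is correct, but it takes a genuinely different route from the paper. The paper's proof of this corollary is a one-line citation: it simply invokes Eq.~(10) of the reference on random orthogonalization, which already establishes that the RO interference decays as $\mathcal{O}(1/N_r)$, and observes that this matches \emph{Theorem \ref{lemma2}}. You instead derive the RO scaling from scratch: the computation $\mathbb{E}\bigl[\vert\mathbf{f}_{n,t}^H\mathbf{h}_{i,l,t}\vert^2\bigr]=LN_r$ for $i\neq n$ (by independence of $\mathbf{f}_{n,t}$ from the interfering channels and the zero-mean, unit-variance structure of the entrywise Gaussian products) is correct, as is the normalization step $\zeta_{\text{RO}}\propto 1/N_r$ forced by the desired-signal coefficient $\Vert\mathbf{h}_{n,l,t}\Vert^2\sim N_r$, giving $P_{\text{I},n}^{\text{RO}}\propto \zeta_{\text{RO}}^2 L N_r \propto 1/N_r$. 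Your closing remark is also the right one to make: the two schemes normalize differently (signal gain $\propto\sqrt{N_r}$ for the analog beamformer via \emph{Lemma \ref{theorem1}} versus $\propto N_r$ for RO), so the comparison is only meaningful through the interference-to-squared-signal-gain ratio, which is $\mathcal{O}(1/N_r)$ in both cases. What your approach buys is a self-contained proof that does not depend on an external result and that makes the normalization convention explicit; what the paper's approach buys is brevity. Both are valid.
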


\begin{proof}
    It has been demonstrated in \cite[Eq. (10)]{ro} that  interference decrease by an order of $\frac{1}{N_r}$, which is same with the conclusion in \emph{Theorem \ref{lemma2}}. \hfill $\square$
\end{proof}

\begin{remark}
    The proposed analog beamforming method fully replicates the functionality of the RO scheme. This implies that achieving random orthogonalization requires only phase matching, thereby avoiding high hardware cost associated with numerous RF chains for the fully-digital beamforming.
\end{remark}

\vspace{-0.3cm}
\subsection{Extension to Discrete Phase Control}
Due to hardware limitations, phase shifts are usually limited to a finite number of discrete values. We further extend the proposed analog beamforming design to the scenarios with discrete phase control. Specifically, we denote the set of discrete phase shifts by $\mathcal{A}\triangleq \left \{ 0,\frac{2\pi}{2^b},\cdots, \frac{(2^b-1)2\pi}{2^b}  \right \}$, where $b$ is the number of quantization bits. Then, we rewrite (\ref{analog}) as 
\begin{align} \label{analog2}
    \tilde{\mathbf{a}}_{n,t}\!=\!\frac{1}{\sqrt{M}}\mathrm{exp}\left( \!j
    \arg \!\!\min_{\bm{\phi}\in \mathcal{A}^M}\!\!\left\{\left \Vert 
    \angle \!\sum_{l=1}^L \mathbf{h}_{n,l,t,n}\!\!-\!\bm{\phi}\right \Vert^2\!\right\}\!\right)\!,
\end{align}
where $\bm{\phi}$ is an $M$-dimensional vector with elements selected from $\mathcal{A}$. Comparing with the perfect case in (\ref{analog}), the configured phase is disturbed by a quantization error, which is modelled as uniformly distributed RVs following $\mathcal{U}\left( -2^{-b}\pi, 2^{-b}\pi\right)$ \cite{sop}. Then, the discrete phase control in (\ref{analog2}) is rewritten as 
\begin{align}\label{analog3}
    \tilde{\mathbf{a}}_{n,t}=\frac{1}{\sqrt{M}}\mathrm{exp}\left( j\angle \sum_{l=1}^L \mathbf{h}_{n,l,t,n}+j\bm{\psi}_{n,t}\right),
\end{align}
where $\bm{\psi}_{n,t}$ denotes the quantization error and $[\bm{\psi}_{n,t}]_m \sim \mathcal{U}\left( -2^{-b}\pi, 2^{-b}\pi\right)$, $\forall m$. Accordingly, we denote the effective channel for the $l$-th device in the $n$-th cluster at the $t$-th round as $\tilde{\mathbf{h}}_{n,l,t}=\tilde{\mathbf{A}}_t \mathbf{h}_{n,l,t}$, where $\tilde{\mathbf{A}}_t=\text{blkdiag}\left\{\tilde{\mathbf{a}}_{1,t}^H,\cdots,\tilde{\mathbf{a}}_{N,t}^H\right\}$. 

\begin{lemma}
    With the discrete phase control in (\ref{analog3}), the expectation and variance of the effective channels are
    \begin{align}
        &\mathbb{E}\left[ \tilde{\mathbf{h}}_{n,l,t}\right]\!=\!\frac{ \sin\left( 2^{-b}\pi\right)\sqrt{M}}{2^{-b+1}\sqrt{\pi L}} \bm{\delta}_n,\, \mathbb{V}\left[ \tilde{\mathbf{h}}_{n,l,t}\right]=\tilde{\bm{\Sigma}}_n,\,\forall n,
    \end{align}
    where $\tilde{\bm{\Sigma}}_n$ is a diagonal matrix, and its $n$-th diagonal element is $1-\frac{\sin^2\left(2^{-b}\pi\right)}{4^{-b+1}\pi L}$ and all the other diagonal elements are 1.
\end{lemma}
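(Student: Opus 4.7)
The plan is to extend the argument behind \emph{Lemma~\ref{theorem1}} by isolating the effect of the quantization error $\bm{\psi}_{n,t}$ from the continuous-phase computation. Because $\tilde{\mathbf{A}}_t$ is block diagonal with $1\times M$ blocks, the $m$-th entry of $\tilde{\mathbf{h}}_{n,l,t}$ is $\tilde{\mathbf{a}}_{m,t}^H\mathbf{h}_{n,l,t,m}$, so I would analyze the off-diagonal case $m\neq n$ and the diagonal case $m=n$ separately. The off-diagonal case is almost immediate: $\tilde{\mathbf{a}}_{m,t}$ depends only on cluster-$m$ channels and on $\bm{\psi}_{m,t}$, both of which are independent of $\mathbf{h}_{n,l,t,m}\sim\mathcal{CN}(\mathbf{0},\mathbf{I}_M)$, so conditioning on $\tilde{\mathbf{a}}_{m,t}$ yields mean zero, while $\mathbb{E}\bigl[|\tilde{\mathbf{a}}_{m,t}^H\mathbf{h}_{n,l,t,m}|^2\bigr]=\|\tilde{\mathbf{a}}_{m,t}\|^2=1$ by the per-shifter unit-modulus constraint. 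This recovers the unit off-diagonal entries of $\tilde{\bm{\Sigma}}_n$ and zero mean in those coordinates.

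For $m=n$ I would write
\[
[\tilde{\mathbf{h}}_{n,l,t}]_n=\frac{1}{\sqrt{M}}\sum_{m=1}^M e^{-j\psi_m}\,e^{-j\angle z_m}\,h_{l,m},
\]
with $h_{l,m}\triangleq[\mathbf{h}_{n,l,t,n}]_m$, $z_m\triangleq\sum_{l'=1}^L h_{l',m}$, and $\psi_m\triangleq[\bm{\psi}_{n,t}]_m$ independent of all channels. The quantization factor then decouples in expectation, because a direct integral over the uniform density gives $\mathbb{E}[e^{-j\psi_m}]=\sin(2^{-b}\pi)/(2^{-b}\pi)$. The remaining expectation $\mathbb{E}[e^{-j\angle z_m}h_{l,m}]$ is the same one that drives \emph{Lemma~\ref{theorem1}}: the cleanest route is to condition on $z_m$ and invoke the complex Gaussian regression identity $\mathbb{E}[h_{l,m}\mid z_m]=z_m/L$, which collapses the expression to $\mathbb{E}[|z_m|]/L=\sqrt{\pi}/(2\sqrt{L})$ via the Rayleigh mean applied to $z_m\sim\mathcal{CN}(0,L)$. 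Multiplying the two decoupled factors and summing $M$ identical contributions yields the claimed mean.

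For the variance I would expand $|[\tilde{\mathbf{h}}_{n,l,t}]_n|^2$ as a double sum and split into $m=m'$ and $m\neq m'$ contributions. The diagonal pairs give $(1/M)\sum_m\mathbb{E}|h_{l,m}|^2=1$, while for $m\neq m'$ the triples $(\psi_m,z_m,h_{l,m})$ are mutually independent across $m$, so the cross expectation factorizes into the squared magnitude of the per-term mean computed above, contributing $(M-1)\sin^2(2^{-b}\pi)/(4^{-b+1}\pi L)$ in total. Subtracting $|\mathbb{E}[\tilde{\mathbf{h}}_{n,l,t}]_n|^2=M\sin^2(2^{-b}\pi)/(4^{-b+1}\pi L)$ leaves exactly $1-\sin^2(2^{-b}\pi)/(4^{-b+1}\pi L)$, matching the claimed diagonal entry of $\tilde{\bm{\Sigma}}_n$. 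The one non-routine step in the whole proposal is the expectation $\mathbb{E}[e^{-j\angle z_m}h_{l,m}]$---integrating directly against the joint density of $(h_{l,m},\angle z_m)$ is messy---but the Gaussian conditioning trick reduces it to a one-line Rayleigh mean, after which everything else is mechanical, and the analysis correctly specializes to \emph{Lemma~\ref{theorem1}} as $b\to\infty$, with the factor $\sin(2^{-b}\pi)/(2^{-b}\pi)$ cleanly absorbing the penalty of finite-resolution phase quantization.
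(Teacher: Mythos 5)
Your proof is correct, and its top-level idea coincides with the paper's: the quantization error $\bm{\psi}_{n,t}$ is independent of all channels, so it decouples in expectation as the multiplicative factor $\mathbb{E}[e^{-j[\bm{\psi}_{n,t}]_m}]=\sin(2^{-b}\pi)/(2^{-b}\pi)$, which rescales the mean of the continuous-phase effective channel and leaves the second moment (hence the off-diagonal unit variances and the form of the diagonal correction) intact. The difference is one of self-containedness: the paper's proof is a single line that computes this $\mathrm{sinc}$ factor and then invokes \emph{Lemma~\ref{theorem1}}, whose own derivation is deferred to an external supplementary file, whereas you re-derive the underlying statistics from scratch. Your route through the Gaussian regression identity $\mathbb{E}[h_{l,m}\mid z_m]=z_m/L$ followed by the Rayleigh mean $\mathbb{E}[|z_m|]=\sqrt{\pi L}/2$ is a clean way to evaluate $\mathbb{E}[e^{-j\angle z_m}h_{l,m}]=\sqrt{\pi}/(2\sqrt{L})$ without touching the joint density, and your variance bookkeeping (the $m=m'$ terms summing to $1$, the $M(M-1)$ cross terms factorizing by independence across antennas, then subtracting the squared mean) checks out exactly to $1-\sin^2(2^{-b}\pi)/(4^{-b+1}\pi L)$. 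The only cosmetic gap is that you establish the per-entry means and variances but do not explicitly note why the covariance matrix is diagonal; this follows immediately from the fact that distinct entries of $\tilde{\mathbf{h}}_{n,l,t}$ involve disjoint, mutually independent channel sub-blocks and quantization errors, and is worth one sentence. In short: same key decoupling as the paper, but your version additionally supplies the proof of the continuous-phase statistics that the paper outsources.
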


\begin{proof}
    The expectation of $e^{j[\bm{\psi}_{n,t}]_m}$ is equal to $\frac{ \sin\left( 2^{-b}\pi\right)}{2^{-b}\pi}$. Then, combining with \emph{Lemma \ref{theorem1}}, we complete the proof.
    \hfill $\square$
\end{proof}

\begin{remark}
    Compared to the results in \emph{Lemma \ref{theorem1}}, discrete phase shifts brings about only a constant scaling but does not change the statistical properties of effective channels. Therefore, the statistical interference elimination is also attained by applying the discrete phase control.
\end{remark}

Next, regarding the analysis of the average power of the interference, we derive the following lemma.

\begin{lemma}\label{lemma3}
    Compared with the continuous phase shifts, the average power of the interference obtained via discrete phase control, denoted by $P_{\text{I},n}^{\text{D}}$, satisfies $\frac{P_{\text{I},n}^{\text{D}}}{P_{\text{I},n}}=\frac{1}{\mathrm{sinc}^2\left(2^{-b}\right)}$,
    where $\mathrm{sinc}(x)\triangleq \frac{\sin(\pi x)}{\pi x}$.
\end{lemma}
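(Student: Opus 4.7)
The plan is to mimic the derivation of $P_{\text{I},n}$ in the proof of Theorem \ref{lemma2}, substituting the discrete-phase effective channel $\tilde{\mathbf{h}}_{i,l,t}$ for $\bar{\mathbf{h}}_{i,l,t}$, and then to isolate the single factor that actually depends on $b$: the digital scaling $\zeta$, which must be rechosen so that the desired aggregation coefficient in (\ref{expectation}) is preserved.

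First I would take $\mathbf{W}_t = \zeta_{\text{D}} \mathbf{I}_N$ in the discrete-phase setting and re-run step (a) of (\ref{eq20}), which relies only on the zero-mean and independence of the $\mathbf{s}_{i,l,t}$ rather than on any specific channel distribution. That yields
\begin{equation*}
P_{\text{I},n}^{\text{D}} = d\zeta_{\text{D}}^2 \sum_{i \neq n}^N \sum_{l=1}^L p_{i,l,t}\, \bm{\delta}_n^H \mathbb{E}\!\left[\tilde{\mathbf{h}}_{i,l,t}\tilde{\mathbf{h}}_{i,l,t}^H\right] \bm{\delta}_n .
\end{equation*}
Next I would invoke the discrete-phase counterpart of Lemma \ref{theorem1}: for every $i \neq n$ the mean of $\tilde{\mathbf{h}}_{i,l,t}$ is a scalar multiple of $\bm{\delta}_i$, so its $n$-th entry vanishes, while the $n$-th diagonal entry of $\tilde{\bm{\Sigma}}_i$ equals $1$ because the $b$-dependent perturbation sits at the self-index $i$ rather than at $n$. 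The contraction therefore equals $1$, identical to the continuous case, and I obtain $P_{\text{I},n}^{\text{D}}/\zeta_{\text{D}}^2 = P_{\text{I},n}/\zeta^2$.

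The whole discrepancy thus collapses onto the ratio $\zeta_{\text{D}}/\zeta$. Since quantization only rescales the desired-signal mean, $\zeta_{\text{D}}$ must absorb the mismatch between the continuous and discrete mean magnitudes on the $n$-th coordinate. A direct comparison of the two lemmas gives
\begin{equation*}
\frac{\sin(2^{-b}\pi)\sqrt{M}}{2^{-b+1}\sqrt{\pi L}} \;=\; \mathrm{sinc}(2^{-b}) \cdot \frac{\sqrt{\pi M}}{2\sqrt{L}} ,
\end{equation*}
so $\zeta_{\text{D}} = \zeta/\mathrm{sinc}(2^{-b})$, and squaring yields the claimed identity. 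The only real obstacle is bookkeeping rather than analysis: one must recognize that the $b$-dependent diagonal perturbation of $\tilde{\bm{\Sigma}}_n$ lives exclusively at the self-index and is therefore absent from the interference sum, so the sinc factor enters purely through the signal-preserving rescaling $\zeta_{\text{D}}$ and not through any change in the interference second moment.
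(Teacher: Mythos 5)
Your argument is correct and follows essentially the same route as the paper: equate the desired-signal aggregation coefficients from the two lemmas to get $\zeta_{\text{D}}=\zeta/\mathrm{sinc}(2^{-b})$, observe that the interference power scales as $\zeta^2$, and square. You are in fact slightly more careful than the paper, which simply asserts the $\zeta^2$ proportionality, whereas you explicitly verify that $\bm{\delta}_n^H\mathbb{E}\bigl[\tilde{\mathbf{h}}_{i,l,t}\tilde{\mathbf{h}}_{i,l,t}^H\bigr]\bm{\delta}_n=1$ for $i\neq n$ (the $b$-dependent perturbation sits only at the self-index), so the second moment of the interference is unchanged by quantization.
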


\begin{proof}
    To achieve the same aggregation coefficient, we have 
    \begin{align}
        \frac{\zeta_1\sqrt{\pi Mp_{n,l,t}}}{2\sqrt{L}}=\frac{ \zeta_2 \sin\left( 2^{-b}\pi\right)\sqrt{Mp_{n,l,t}}}{2^{-b+1}\sqrt{\pi L}},
    \end{align}
    where $\zeta_1$ and $\zeta_2$ respectively denote the scaling factors for the continuous and discrete phase control. Moreover, according to (\ref{eq20}), the average power of the interference is proportional to $\zeta^2$. Hence, we complete the proof. \hfill $\square$
\end{proof}

As observed in \emph{Lemma \ref{lemma3}}, the performance gap between the continuous and discrete phase control methods vanishes with increasing $b$. Moreover, $P_{\text{I},n}^{\text{D}}$ also decreases by $\frac{1}{N_r}$ as $N_r$ increases and we conclude the following corollary.

\begin{corollary}
    Regardless of the quantization precision of phase shifts, the proposed analog beamforming method always achieves the same scaling law for interference power.
\end{corollary}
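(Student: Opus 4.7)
The plan is to derive the corollary as a direct consequence of combining \emph{Theorem \ref{lemma2}} with \emph{Lemma \ref{lemma3}}. Specifically, \emph{Theorem \ref{lemma2}} already establishes that under continuous phase shifts the interference power obeys $P_{\text{I},n}\propto 1/N_r$, and \emph{Lemma \ref{lemma3}} relates the discrete and continuous cases by the identity $P_{\text{I},n}^{\text{D}}=P_{\text{I},n}/\mathrm{sinc}^2(2^{-b})$. Chaining these two facts should immediately yield $P_{\text{I},n}^{\text{D}}\propto 1/N_r$ as well.

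The key conceptual step is to observe that $\mathrm{sinc}^2(2^{-b})$ depends only on the quantization precision $b$ and is entirely independent of $N_r$. Hence, the multiplicative factor introduced by quantization acts as a constant with respect to the array size. First I would state this explicitly, then write $P_{\text{I},n}^{\text{D}} = \mathrm{sinc}^{-2}(2^{-b})\cdot P_{\text{I},n}$, and finally invoke $P_{\text{I},n}=\mathcal{O}(1/N_r)$ from \emph{Theorem \ref{lemma2}} to conclude $P_{\text{I},n}^{\text{D}}=\mathcal{O}(1/N_r)$ with the constant hidden in $\mathcal{O}(\cdot)$ merely inflated by a factor depending on $b$.

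I do not anticipate any genuine obstacle, since no new stochastic analysis is needed: both ingredients are already proven. The only subtle point to double-check is that the rescaling argument underlying \emph{Lemma \ref{lemma3}} (matching the aggregation coefficient via the scaling factor $\zeta$) is valid for every $b\ge 1$, so that the ratio identity holds uniformly in $b$ and can be combined with the asymptotic statement in $N_r$ without interaction between the two limits. Once that is noted, the proof is essentially a one-line composition, which makes a proof proposal of more than a short paragraph unnecessary.
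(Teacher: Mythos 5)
Your argument is correct and is essentially the same as the paper's: the paper likewise justifies this corollary by noting that Lemma~3 gives $P_{\text{I},n}^{\text{D}}=P_{\text{I},n}/\mathrm{sinc}^2\left(2^{-b}\right)$ with a factor independent of $N_r$, so the $\mathcal{O}\left(1/N_r\right)$ scaling from Theorem~1 carries over unchanged. Your extra remark that the $\zeta$-rescaling in Lemma~3 holds uniformly in $b$ is a sensible sanity check but adds nothing beyond what the paper already assumes.
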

The proposed analog beamforming method, designed to prioritize favorable propagation over precise signal processing, demonstrates resilience to the impacts of non-ideal hardware. Hence, it suggests us to use a large number of low-precision phase shifters to achieve a satisfactory performance of interference elimination.

\begin{figure*}[!t]
	\centering
	\begin{minipage}[t]{0.32\linewidth}
		\centering
		\includegraphics[width=1\linewidth]{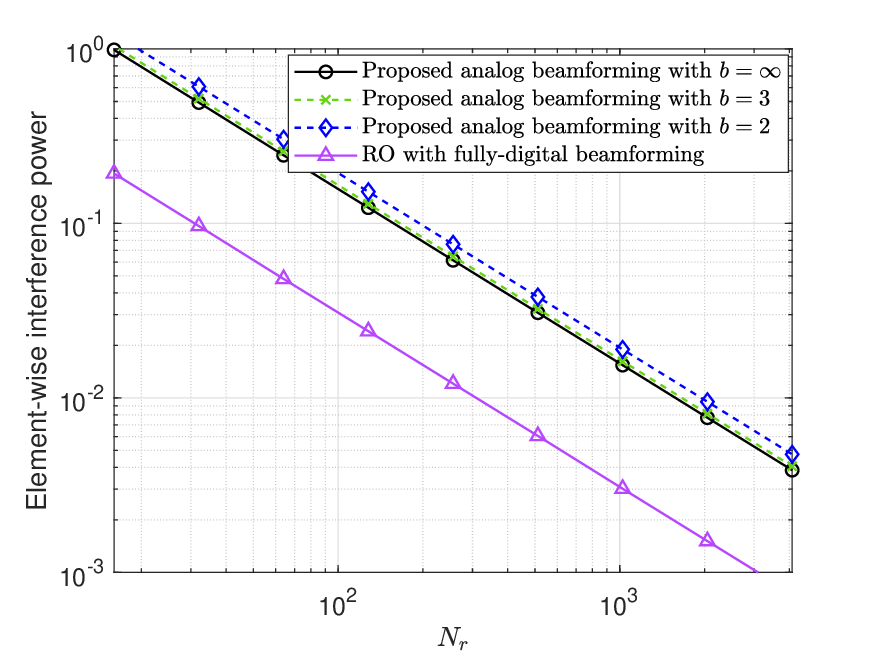}
		\caption{Element-wise power of the interference versus $N_r$ ($K=100, N=4$).}\label{fig1}
	\end{minipage}
	\begin{minipage}[t]{0.32\linewidth}
		\centering
		\includegraphics[width=1\linewidth]{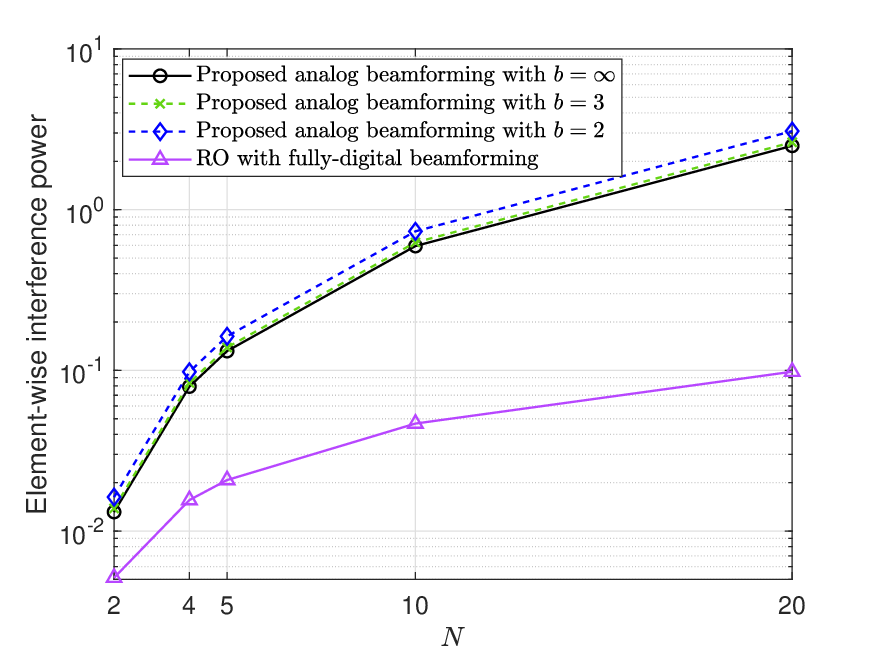}
		\caption{Element-wise power of the interference versus $N$ ($K=100, N_r=200$).}\label{fig2}
	\end{minipage}
	\begin{minipage}[t]{0.32\linewidth}
		\centering
		\includegraphics[width=1\linewidth]{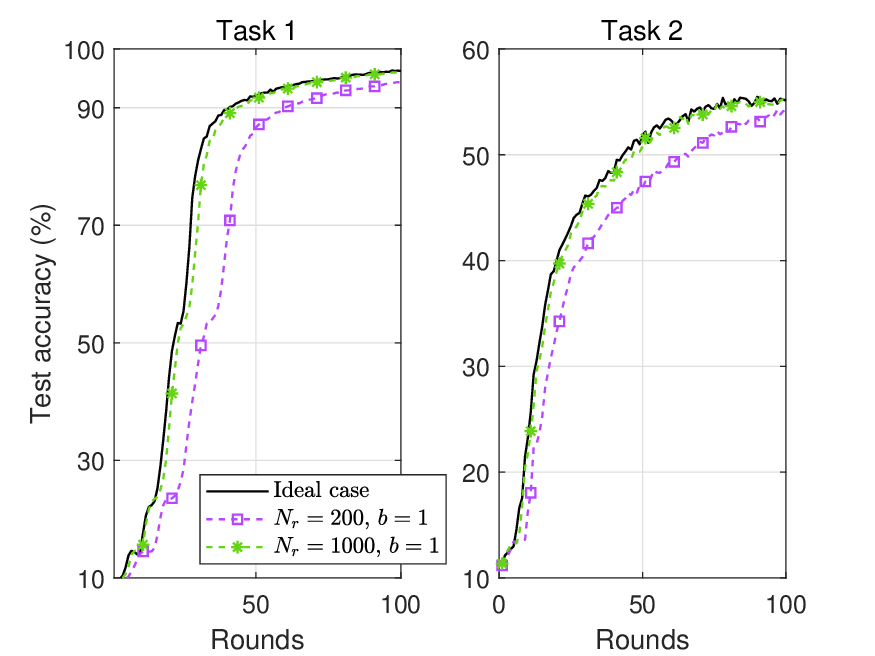}
		\caption{Test accuracy versus communication rounds ($K=20$, $N=2$, $\text{SNR}=0$ dB).}\label{fig7}
	\end{minipage}
 \vspace{-0.4cm}
\end{figure*}

\section{Numerical Results}
In this section, numerical simulations are presented to validate the proposed scheme. Without loss of generality, we neglect the large-scale path loss and normalize it as 1. Besides, we adopt the same transmit power budget of each devices, denoted by $P$. The signal-to-noise ratio (SNR) is defined as $\frac{P}{\sigma^2}$. 
To evaluate the learning performance, we simultaneously conduct two FL tasks of image classification. The two tasks are performed on the two popular datasets, i.e., MNIST and CIFAR-10, respectively. The trained AI model is a convolutional neural network (CNN).  The learning parameters are set as: the batch size $64$, the learning rate $\eta_{n,t}=0.01$, and the number of local iterations $E=1$ and $E=5$ for the FL tasks on MNIST and CIFAR-10, respectively.

Fig. \ref{fig1} depicts the element-wise power of the interference with different number of phase shifters (antennas), $N_r$. The element-wise power of the interference is defined as $\frac{\sum_{n=1}^NP_{\text{I},n}}{dN}$, representing the normalized interference power. It is observed that, in line with the theoretical results, the power of the interference decreases by an order of $\frac{1}{N_r}$ with $N_r$ using the proposed analog beamforming, and exhibits the same order with that using RO. Furthermore, we observe that using a low precision phase shifter does not result in significant performance loss. The performance comparable to that with ideal continuous phase shifters can be achieved with $b=3$. These results demonstrate that the additional expensive RF chains can be entirely replaced with low-cost phase shifters. 

In Fig. \ref{fig2}, we show the impact of the number of tasks, $N$. As $N$ increases, all schemes experience greater interference. Compared to the RO method, the performance gap between the proposed analog beamforming and it gradually  widens with increasing $N$. This is because as the number of tasks increases, the number of phase shifters assigned to each task decreases, leading to reduced interference suppression performance. This indicates that when faced with more tasks, additional analog phase shifters are required to achieve satisfactory performance.

Finally, we evaluate the convergence performance of the FL tasks in Fig. \ref{fig7}. We exploit the ideal case with perfect transmission as a benchmark. When sufficient phase shifters are deployed, e.g., $N_r=1000$, the proposed analog beamforming method effectively eliminates interference and approaches the FL performance under ideal case, even with low-precision phase shifters. 

\vspace{-0.1cm}
\section{Conclusion}
In this paper, we have developed a multi-task AirFL framework based on analog beamforming at the PS. Following the favorable propagation and channel hardening properties, we have designed a quantized analog beamforming method for statistical interference elimination. It has validated the cost-effective characteristic of the proposed method compared to existing fully-digital beamforming method based on RO.

\bibliographystyle{IEEEtran}
\bibliography{IEEEabrv,ref}

\end{document}